\def\BibTeX{{\rm B\kern-.05em{\sc i\kern-.025em b}\kern-.08em
		T\kern-.1667em\lower.7ex\hbox{E}\kern-.125emX}}
\newtheorem{theorem}{Theorem}
\newtheorem{lemma}[theorem]{Lemma}
\newtheorem{definition}{Definition}
\DeclareMathOperator{\sign}{sign}
\DeclareMathOperator{\bigO}{\mathcal{O}}
\DeclarePairedDelimiter\norm{\lVert}{\rVert}
\DeclarePairedDelimiterX{\inner}[2]{\langle}{\rangle}{#1, #2}
\newcommand{\Rset}{\mathbb{R}}
\newcommand{\eps}{\varepsilon}
\newcommand{\Oinf}[1]{
\ifthenelse{
	\equal{#1}{0}}
	{\mathcal{O}(1)}
	{\mathcal{O}(\varepsilon^{#1})}
}
\newcommand{\foo}[1]{%
  \ifthenelse{\equal{\detokenize{#1}}{\detokenize{german}}}
    {TRUE}
    {FALSE}%
}
\title{Error analysis of a demodulation procedure for multicarrier signals with slowly-varying carriers}
\author{\IEEEauthorblockN{Dilshad Surroop\IEEEauthorrefmark{1}\IEEEauthorrefmark{2}, 
		Pascal Combes\IEEEauthorrefmark{2}, 
		Philippe Martin\IEEEauthorrefmark{1}}
	\IEEEauthorblockA{\IEEEauthorrefmark{1}Centre Automatique et Systèmes, MINES ParisTech, PSL University, Paris, France\\
		Email: {\tt\small\{dilshad.surroop,philippe.martin\}@mines-paristech.fr}}
	\IEEEauthorblockA{\IEEEauthorrefmark{2}Industrial Automation Business, Schneider Electric, Pacy-sur-Eure, France\\
		Email: {\tt\small pascal.combes@se.com}}}
\begin{document}
\maketitle

\begin{abstract}
We propose a procedure to demodulate analog signals encoded by a multicarrier modulator, with slowly-varying carrier shapes. We prove that the asymptotic demodulation error can be made arbitrarily small. The intended application is the ``sensorless'' control of AC electric motors at or near standstill, through the decoding of the PWM-induced current ripple.
\end{abstract}
\begin{IEEEkeywords}
analog demodulation, multicarrier signals, slowly-varying carriers, multiple access methods, PWM injection
\end{IEEEkeywords}

\section{Introduction}

We consider a composite signal~$y$ of the form
\begin{IEEEeqnarray}{rCl}\label{eq:MCsignal}
	y(t) &:=& \sum_{i=1}^n z_i(t)s_i\bigl(t,\tfrac{t}{\eps}\bigr)+ d\bigl(t,\tfrac{t}{\eps}\bigr)+\bigO(\eps^k),
\end{IEEEeqnarray}
where the $s_i$'s are (known) 1-periodic functions in the second variable; $\eps$ being a (known) ``small'' parameter, the $s_i$'s can been seen as rapidly-oscillating carriers with slowly-varying shapes modulating the (unknown)~$z_i$'s.
The function $d$ is a disturbance, also 1-periodic in the second variable, about which little is known except that for each~$t$ the support of $d(t,\cdot)$ is contained in a ``well-behaved'' known subset $D_t$ of~$[0,1)$. In other words, on each period of the carriers, part of the signal~$y$ is garbled and considered useless.
Finally, the $\bigO(\eps^k)$ term corresponds to ``small'' disturbances, where $\bigO$ denotes the (uniform) ``big O'' symbol of analysis, i.e. $f(t,\varepsilon)=\bigO(\eps^k)$ if $\norm{f(t,\eps)}\leq K\eps^k$ for some $K$ independent of $t$ and~$\eps$.

The objective is to recover by an implementable causal process the unknown $z_i$'s with an accuracy of up to $\bigO(\eps^k)$ from the known $y$ and $s_i$'s, provide the $s_i$'s and $z_i$'s satisfy some suitable regularity assumptions.

The motivation for this problem is the following. When operating an AC electric motor through a PWM inverter with period~$\eps$, an analysis based on the theory of averaging reveals that the currents in the motor have the form 
\begin{IEEEeqnarray*}{rCl}
	y(t) &=& y_a(t)
	+\eps y_v(t)s\bigl(t,\tfrac{t}{\eps}\bigr) 
	+\bigO(\eps^2)
	+d\bigl(t,\tfrac{t}{\eps}\bigr),
\end{IEEEeqnarray*}
which is a particular instance of~\eqref{eq:MCsignal} with $z_1:=y_a$, $z_2:=\eps y_v$, $s_1:=1$, and $s_2:=s$, where $s$ is determined by the PWM process~\cite{SurroCMR2020ACC,SurroCMR2020IECON}. The $\bigO(\eps^2)$ term corresponds to a small higher-order ripple which can be ignored. The disturbance~$d$ consists of short spikes appearing at each PWM commutation, due to stray capacitances in the power electronics. A typical (synthetic) signal~$y$ is shown in Fig.~\ref{fig:y}, see also~\cite[Fig.~9]{SurroCMR2020IECON} for experimental data. In ``sensorless'' industrial drives, these currents are the only measurements, and controlling the motor at or near standstill with this sole information is a difficult problem for several theoretical and technological reasons. A way to achieve this it to extract $y_a$ and $\eps y_v$ from the modulated currents~$y$; a suitable processing of~$y_v$ then gives access to the motor angular position~\cite{SurroCMR2020IECON}, which is instrumental in controlling the motor.
It is therefore very important to ensure the demodulation error is at most~$\bigO(\eps^2)$.

The demodulation procedure proposed in this paper, essentially consisting of multiplications by known signals followed by low-pass filters, is reminiscent of various schemes in communication theory and signal processing. Nevertheless, nothing really close seems to exist in the literature, let alone a quantitative analysis of the demodulation error:
\begin{itemize}
	\item it is of course a generalization of coherent demodulation in quadrature carrier multiplexing, with more than two carriers not restricted to sine and cosine, see e.g. \cite[section~4.4]{LathiD2010book}; but even in this simple case, no analysis of the error is usually performed, the challenges being more on carrier reconstruction
	%
	\item it somewhat looks like synchronous decorrelating detection in 
	Code-Division Multiple Access communication systems, where the $s_i$'s would play the roles of the signature waveforms and the $s_i$'s the role of the symbols, see e.g. \cite[section 5.1]{Verdu1998book}; but the encoded signals being there digital, the issues and analysis are very different
	\item it is also akin to multicarrier reception, with or without multiple access, see e.g \cite[section 12.2]{Golds2005book} 
	and~\cite[section 2.2]{Yang2009book}; 
	but once again that field is exclusively concerned with digital encoded signals
	\item finally, it bears some resemblance for its filtering part with the interpolation/compensation filters used in $\Delta\Sigma$ analog-to-digital converters, see e.g.~\cite[chapter~14]{PavanST2017book}. 
\end{itemize}

The paper extends the previous work~\cite{SurroCMR2019IECON} in two ways that are paramount for the intended application: on the one hand, it considers carriers with slowly-varying shapes, which makes the error analysis much more difficult; on the other hand, the procedure is not restricted to ``orthogonal'' demodulation, hence can directly handle the disturbance~$d$ without ad-hoc prefiltering as in~\cite{SurroCMR2020IECON}.

The  paper runs as follows: in section~\ref{sec:definitions}, we collect notations and definitions, in particular the $\mathcal{A}_k$ regularity property; in section~\ref{sec:demodulation} we state and prove the main result; in section~\ref{sec:numerical} we illustrate this result and confirm the error estimates with numerical experiments.

\section{Notations and definitions}\label{sec:definitions}
We collect here definitions used throughout the paper. The most important notion is the $\mathcal{A}_k$ regularity property introduced in proposition~\ref{def:Ak}, which is needed in lemma~\ref{th:convol_Kk} to repeatedly integrate by parts; this property, which is paramount for handling carriers with slowly-varying shapes, is trivially satisfied for fixed-shape carriers as in~\cite{SurroCMR2019IECON}.

Let $g(t,\sigma)$ be a function of two variables; informally speaking, $t$ represents the slow timescale and $\sigma$ the fast timescale. We will often use the convenient notation $g_\varepsilon(t):=g\bigl(t,\frac{t}{\eps}\bigr)$. 

The function $g$ is $1$-periodic in the second variable if $g(t,\sigma+1)=g(t,\sigma)$ for all~$t$. Its mean in the second variable is the function $\overline{g}(t):=\int_0^1g(t,\sigma)d\sigma$. For brevity, we will usually omit the phrase ``in the second variable''.
If $g$ is 1-periodic with zero mean, any of its primitives (in the second variable) is also 1-periodic, in particular its zero-mean primitive $g^{(-1)}(t,\sigma) := \int_0^\sigma g(t,\tau)d\tau-\int_0^1 \int_0^\zeta g(t,\tau)d\tau d\zeta$. Likewise, $g^{(-k-1)}$ denotes the zero-mean primitive of $g^{(-k)}$.

We say $g$ is Lipschitz if $\norm{g(t_1,\sigma)-g(t_2,\sigma)}\le L\norm{t_1-t_2}$ for some $L$ independent of $t_1$, $t_2$ and~$\sigma$.

Finally, we introduce the $\mathcal{A}_k$ regularity property.
\begin{definition}[$\mathcal{A}_k$ property]\label{def:Ak}
	Let $g(t,\sigma)$ be 1-periodic with zero mean. It is said to be $\mathcal{A}_k$, $k\ge1$, if $g^{(-k)}$ is $k-1$ times differentiable in the first variable, with bounded derivatives at all orders,
and $\partial_1^{k-1}g^{(-k)}$ 
Lipschitz.
\end{definition}
A typical $\mathcal{A}_k$ function encountered in practice is $g(t,\sigma)=\sign\bigl(u(t)-\boldsymbol{\sigma}\bigr)-2u(t)+1$
where $\boldsymbol{\sigma}:=\sigma\mod1$; $u(t)\in(0,1)$ represents the PWM duty cycle and is assumed $k-1$ times differentiable, with bounded derivatives at all orders, and $u^{(k-1)}$ Lipschitz.

It is easy to show that if on the one hand $g(t,\sigma)$ is~$\mathcal{A}_k$, and on the other hand $z(t)$ is $k-1$ times differentiable, with bounded derivatives at all orders, and $z^{(k-1)}$ Lipschitz, then the product $zg$ is also~$\mathcal{A}_k$.

\section{The demodulation procedure}\label{sec:demodulation}
The demodulation procedure for an error of order~$\eps^k$ consists of multiplications by a suitable demodulating basis $R:=(r_1, \ldots, r_n)^T$, followed by a bank of low-pass finite impulse response filters with kernel~$\widetilde K_k$; see section~\ref{sec:main} for a discussion
of how to select~$R$.
The kernel $\widetilde K_k$ is a ``compensated'' $k$-times iterated moving average, namely a suitable linear combination of shifted instances of~$K_k$, where the kernel $K_k$ is defined recursively by $K_1:= \frac{1}{\eps}\mathbbm{1}_{[0,\eps]}$ and $K_k:=K_{k-1}*K$, see e.g.~\cite[chapter~$6.7$]{Aubin2000book} for explicit expressions. For instance for $k=3$, the linear combination is
\begin{IEEEeqnarray*}{rCl}
	\widetilde K_3(t) &:=& \tfrac{17}{4}K_3(t) -5K_3(t-\eps) + \tfrac{7}{4}K_3(t-2\eps),
\end{IEEEeqnarray*}
see section~\ref{sec:proof_thm} for more details.

\begin{figure}[ht]
	\centering
	\includegraphics[scale=1]{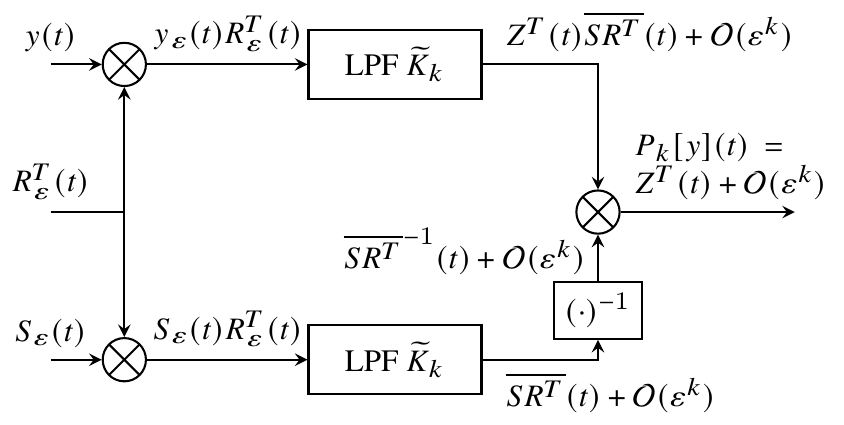}
	\caption{The demodulation procedure.}
	\label{fig:diagram}
\end{figure}

Fig.~\ref{fig:diagram} illustrates the whole demodulation procedure:
\begin{itemize}
	\item $y(t)$ is multiplied by $R_\eps^T(t)$, and filtered by~$\widetilde{K}_k$; 
	the result, $\Bigl(\widetilde K_k*\bigl(yR^T_\eps\bigr)\Bigr)(t)$, turns out to be $Z^T(t)\overline{SR^T}(t)+\bigO(\eps^k)$, where $Z:=(z_1,\ldots,z_n)^T$ is the vector signal to recover
	\item the modulating basis $S:=(s_1,\ldots,s_n)^T$ is also multiplied by $R_\eps^T(t)$, and filtered by~$\widetilde{K}_k$; 
	the result, $\Bigl(\widetilde K_k*\bigl(S_\eps R_\eps^T\bigr)\Bigr)(t)$, turns out to be $\overline{SR^T}(t) +\bigO(\eps^k)$
	\item finally, $\Bigl(\widetilde K_k*\bigl(yR^T_\eps\bigr)\Bigr)(t)$ is multiplied by the inverse of the matrix $\Bigl(\widetilde K_k*\bigl(S_\eps R_\eps^T\bigr)\Bigr)(t)$; the result, $\Bigl(\widetilde K_k*\bigl(y R_\varepsilon^T\bigr)\Bigr)(t)
	\times \Bigl(\widetilde K_k*\bigl(S_\eps R_\eps^T\bigr)\Bigr)^{-1}(t)$, is as desired $Z^T(t)+\bigO(\eps^k)$.
\end{itemize}
As pointed out in the introduction, this demodulation scheme is at first sight not completely surprising. What is much less obvious is that the overall demodulation error is indeed of order~$\eps^k$.

\subsection{Main result}\label{sec:main}
We assume that the $s_i$'s are independent outside the subset~$D_t$ containing the support of the disturbance~$d(t,\cdot)$,
i.e. that the $\check s_i$'s defined by
$\check s_i(t,\sigma):=\bigl(1-\mathbbm{1}_{D_t}\bigr)(\sigma)s_i(t,\sigma)$
are linearly independent. We can thus choose the demodulating basis $R:=(r_1, \ldots, r_n)^T$ such that $Rd=0$ and $\overline{SR^T}$ is invertible,
where $S:=(s_1,\ldots,s_n)^T$ is the modulating basis; one simple choice is for instance $R(t,\sigma):=\bigl(1-\mathbbm{1}_{D_t}(\sigma)\bigr)S(t,\sigma)$.
A delicate point is to select~$R$ also such that $SR^T-\overline{SR^T}$ is $\mathcal{A}_k$, provided of course that $D_t$ is ``well-behaved'' (for instance a finite union of intervals with sufficiently regular moving bounds). For simplicity, we just assume this is the case (and check it a posteriori in the numerical experiments of section~\ref{sec:numerical}).
Finally, we assume the $z_i$'s are $k-1$ times differentiable, with bounded derivatives at all orders, and $z_i^{(k-1)}$ Lipschitz,
so that $z_i\bigl(SR^T-\overline{SR^T}\bigr)$ is also~$\mathcal{A}_k$.
\begin{theorem}\label{thm:main}
$Z:=(z_1,\ldots,z_n)^T$ can be recovered to order~$\eps^k$ from~$y$ by the causal process $P_k$ defined by
\begin{IEEEeqnarray*}{rCl}
	P_k[y](t) &:=& \Bigl(\widetilde K_k*\bigl(y R_\varepsilon^T\bigr)\Bigr)(t)
	\times \Bigl(\widetilde K_k*\bigl(S_\eps R_\eps^T\bigr)\Bigr)^{-1}(t).
\end{IEEEeqnarray*}
In other words, $Z^T(t)=P_k[y](t)+\bigO(\eps^k)$.
\end{theorem}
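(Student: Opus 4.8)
The plan is to treat the numerator $\widetilde K_k*\bigl(yR_\eps^T\bigr)$ and the denominator $\widetilde K_k*\bigl(S_\eps R_\eps^T\bigr)$ separately, show that each collapses onto its slow mean up to $\bigO(\eps^k)$, and then combine them through a perturbative matrix inversion. The whole argument rests on a single analytic lemma (lemma~\ref{th:convol_Kk}), which I would use in the form: if $h-\overline h$ is $\mathcal{A}_k$ and $\overline h$ has $k-1$ bounded derivatives with $\overline h^{(k-1)}$ Lipschitz, then $\bigl(\widetilde K_k*h_\eps\bigr)(t)=\overline h(t)+\bigO(\eps^k)$ uniformly in~$t$. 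In words, the compensated kernel reproduces slowly-varying signals to order~$\eps^k$ while annihilating the fast zero-mean oscillations to the same order. Causality is automatic, since $\widetilde K_k$ is supported on a right-sided window of width~$\bigO(\eps)$.

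For the numerator, I first multiply the signal~\eqref{eq:MCsignal} by $R_\eps^T$. Because $R$ is built so that $Rd=0$ pointwise (with the choice $R=(1-\mathbbm{1}_{D_t})S$, either $\sigma\in D_t$ kills $R$ or $\sigma\notin D_t$ kills $d$), the disturbance term disappears \emph{exactly}, not merely to order~$\eps^k$; and since $R$ is bounded, the $\bigO(\eps^k)$ term in~\eqref{eq:MCsignal} remains $\bigO(\eps^k)$. Thus
\begin{IEEEeqnarray*}{rCl}
	y(t)R_\eps^T(t) &=& Z^T(t)\bigl(SR^T\bigr)\bigl(t,\tfrac{t}{\eps}\bigr)+\bigO(\eps^k).
\end{IEEEeqnarray*}
Splitting $SR^T=\overline{SR^T}+\bigl(SR^T-\overline{SR^T}\bigr)$ and recalling the standing assumption that each $z_i\bigl(SR^T-\overline{SR^T}\bigr)$ is $\mathcal{A}_k$, the lemma applied entrywise gives
\begin{IEEEeqnarray*}{rCl}
	\Bigl(\widetilde K_k*\bigl(yR_\eps^T\bigr)\Bigr)(t) &=& Z^T(t)\overline{SR^T}(t)+\bigO(\eps^k).
\end{IEEEeqnarray*}

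The denominator is the same computation with $Z\equiv\mathbbm{1}$: the entries of $SR^T-\overline{SR^T}$ are themselves $\mathcal{A}_k$, so the lemma yields $\bigl(\widetilde K_k*(S_\eps R_\eps^T)\bigr)(t)=\overline{SR^T}(t)+\bigO(\eps^k)$. Since $\overline{SR^T}$ is invertible by construction with inverse bounded uniformly in~$t$, a Neumann-series argument gives $\bigl(\overline{SR^T}+\bigO(\eps^k)\bigr)^{-1}=\bigl(\overline{SR^T}\bigr)^{-1}+\bigO(\eps^k)$, valid once $\eps$ is small enough that the perturbation has norm below~$1$ relative to $\bigl(\overline{SR^T}\bigr)^{-1}$. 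Multiplying the numerator expansion by this inverse, the factors $\overline{SR^T}\bigl(\overline{SR^T}\bigr)^{-1}$ telescope to the identity, leaving $P_k[y](t)=Z^T(t)+\bigO(\eps^k)$, which is the claim.

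The theorem proper is thus mostly bookkeeping; the genuine difficulty is concentrated in lemma~\ref{th:convol_Kk}. There the hard part is the fast-oscillation estimate: writing $\widetilde K_k$ as a fixed linear combination of shifted iterated moving averages $K_k$ and integrating by parts $k$ times, each step trading a derivative of the bounded primitives $g^{(-1)},\dots,g^{(-k)}$ (whose existence and regularity is exactly what $\mathcal{A}_k$ guarantees) for one power of~$\eps$. The complementary slow-reproduction estimate is what pins down the combination coefficients of $\widetilde K_k$ (for instance $\tfrac{17}{4},-5,\tfrac{7}{4}$ at $k=3$): they must cancel the low-order moments of $K_k$ so that polynomials of degree $<k$ pass through undistorted. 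Throughout, the main care is to keep every $\bigO$ constant uniform in~$t$, which is precisely where the boundedness of all derivatives and the Lipschitz hypotheses in the definition of $\mathcal{A}_k$ and in the assumptions on the~$z_i$ are consumed.
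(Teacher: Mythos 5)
Your proof is correct and follows essentially the same route as the paper: multiply by $R_\eps^T$ so that $Rd=0$ kills the disturbance exactly, split $S_\eps R_\eps^T$ into its mean plus the zero-mean $\mathcal{A}_k$ fluctuation, annihilate the fluctuation via Lemma~\ref{th:convol_Kk} (applied to each shifted kernel making up $\widetilde K_k$), reproduce the slow part $Z^T\overline{SR^T}$ with the compensated kernel, and then invert the denominator matrix. The only differences are presentational: you fold the annihilation and reproduction properties into a single lemma statement (the paper keeps them separate, citing its earlier work for the reproduction step), and you make explicit the Neumann-series argument for $\bigl(\overline{SR^T}+\bigO(\eps^k)\bigr)^{-1}=\bigl(\overline{SR^T}\bigr)^{-1}+\bigO(\eps^k)$, a step the paper glosses over with ``since $\overline{SR^T}(t)$ is invertible.''
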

\subsection{Proof of theorem~\ref{thm:main}}\label{sec:proof_thm}
\label{subsec:reconstruction}

Rewriting~\eqref{eq:MCsignal} as
\begin{IEEEeqnarray*}{rCl}
	y(t) &=& Z^T(t)S\bigl(t,\tfrac{t}{\eps}\bigr)+ d\bigl(t,\tfrac{t}{\eps}\bigr)+\bigO(\eps^k),           
\end{IEEEeqnarray*}
right-multiplying by $R^T_\eps$ and convolving with~$K_k$ yields
\begin{IEEEeqnarray*}{rCl}
	\Bigl(K_k*\bigl(yR^T_\eps\bigr)\Bigr)(t) &=& \Bigl(K_k*\bigl(Z^TS_\eps R^T_\eps\bigr)\Bigr)(t)+\bigO(\eps^k)\\
	&=& \Bigl[K_k*\Bigl(Z^T\bigl(S_\eps R^T_\eps-\overline{SR^T}\bigr)\Bigr)\Bigr](t)\\
	&&\quad+\,\Bigl(K_k*\bigl(Z^T\overline{SR^T}\bigr)\Bigr)(t)+\bigO(\eps^k)\\
	&=& \Bigl(K_k*\bigl(Z^T\overline{SR^T}\bigr)\Bigr)(t)+\bigO(\eps^k);
\end{IEEEeqnarray*}
to obtain the last line, we have applied Lemma~\ref{th:convol_Kk}
with $g(t,\sigma):=Z^T(t)\bigl(S(t,\sigma)R^T(t,\sigma)-\overline{SR^T}(t)\bigr)$,
which is by construction zero-mean and~$\mathcal{A}_k$.
The result obviously holds also if $K_k(t)$ is replaced by the shifted kernel $\tau_TK_k(t):=K_k(t-T)$.

On the other hand, \cite[Theorem~1]{SurroCMR2019IECON} asserts that a $\mathcal{C}^k$-function $\varphi$ with bounded $\varphi^{(k)}$
is left unchanged to order~$\eps^k$ by a suitable linear combination $\widetilde K_k$ of the shifted kernels $\tau_{i\eps}K_k$, $i=0,\ldots,k-1$, i.e.
$(\widetilde K_k*\varphi)(t)=\varphi(t)+\bigO(\eps^k)$.
For instance,
\begin{IEEEeqnarray*}{rCl}
	\widetilde K_1(t) &:=& K_1(t)\\
	\widetilde K_2(t) &:=& 2K_2(t)-K_2(t-\eps), \\
	\widetilde K_3(t) &:=& \tfrac{17}{4}K_3(t) -5K_3(t-\eps) + \tfrac{7}{4}K_3(t-2\eps).
\end{IEEEeqnarray*}
Actually, we must slightly extend the result to the case where $\varphi$ is $k-1$ times differentiable with $\varphi^{(k-1)}$ Lipschitz, which we omit by lack of space.
As a consequence,
\begin{IEEEeqnarray*}{rCl}
	\Bigl(\widetilde K_k*\bigl(yR^T_\eps\bigr)\Bigr)(t) 
	&=& \Bigl(\widetilde K_k*\bigl(Z^T\overline{SR^T}\bigr)\Bigr)(t)+\bigO(\eps^k)\\
	&=& Z^T(t)\overline{SR^T}(t)+\bigO(\eps^k).
\end{IEEEeqnarray*}
Since $\overline{SR^T}(t)$ is invertible, $Z(t)$ can be recovered to order~$\eps^k$.

To make the process truly implementable in practice, notice $\overline{SR^T}(t)$ can be computed to order~$\eps^k$ by
\begin{IEEEeqnarray*}{rCl}
	\Bigl(\widetilde K_k*\bigl(S_\eps R_\eps^T\bigr)\Bigr)(t) &=& \overline{SR^T}(t) +\bigO(\eps^k), 
\end{IEEEeqnarray*}
which is an instance of the previous equation with $Z$ constant.

In conclusion, $Z(t)$ is recovered to order~$\eps^k$ by
\begin{IEEEeqnarray*}{rCl}
	P_k[y](t) &:=& \Bigl(\widetilde K_k*\bigl(y R_\varepsilon^T\bigr)\Bigr)(t)
	\times \Bigl(\widetilde K_k*\bigl(S_\eps R_\eps^T\bigr)\Bigr)^{-1}(t)\\
	&=& Z^T(t) +\bigO(\eps^k),
\end{IEEEeqnarray*}
where the process~$P_k$ is causal since the kernel $\widetilde K_k$ is supported on~$[0,k\eps]\subset\Rset^+$.

\subsection{Technical lemmas}
This section is quite technical and can be skipped without disturbing the flow of ideas. Its goal is to establish Lemma~\ref{th:convol_Kk}, which is instrumental in the proof of Theorem~\ref{thm:main}. Lemma~\ref{th:convol_Kk} relies on Lemma~\ref{lemma:delta}, which itself relies on Lemma~\ref{lemma:deltak}. Lemmas \ref{th:convol_Kk} and~\ref{lemma:delta} are in some sense properties of the convolution kernel~$K_k$, whereas Lemma~\ref{lemma:deltak} extends to our context a classical result of finite-differences calculus. Notice the use of the $\mathcal{A}_k$ property when integrating by parts in Lemma~\ref{th:convol_Kk}, which is the main trick to extend the ideas of~\cite{SurroCMR2019IECON} to slowly-moving carriers.

Define the $k^\text{th}$-order backward difference $\Delta_k g_\eps$ of the function $g_\eps(t):=g(t,\frac{t}{\eps})$ by 
\begin{IEEEeqnarray*}{rCl}
	(\Delta_k g_\eps)(t) &:=& \sum_{i=0}^k (-1)^i \binom{k}{i} g_\eps(t-i\eps).
\end{IEEEeqnarray*}
On the other hand, recall that $K_k$ is $k-1$ times differentiable, with compact support for all the derivatives. As for $K_k^{(k)}$, it can be defined in the distributional sense, and is a linear combination of Dirac delta functions, and in particular also has compact support; for instance, $K_1^{(1)}=\frac{1}{\eps}\bigl(\delta_0-\delta_\eps\bigr)$.
\begin{lemma}
	\label{lemma:deltak}
	Let $g(t,\sigma)$ be 1-periodic, and $k-1$ times differentiable in the first variable with $\partial_1^{(k-1)}g$ Lipschitz. Then $(\Delta_k g_\eps)(t) = \bigO(\eps^k)$.
\end{lemma}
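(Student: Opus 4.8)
The plan is to exploit the $1$-periodicity of $g$ in its second variable to freeze the fast timescale, reducing $(\Delta_k g_\eps)(t)$ to an ordinary step-$\eps$ backward difference acting on the slow variable alone. First I would note that $(t-i\eps)/\eps=\tfrac{t}{\eps}-i$ with $i$ an integer, so periodicity gives
\[
  g_\eps(t-i\eps)=g\bigl(t-i\eps,\tfrac{t}{\eps}-i\bigr)=g\bigl(t-i\eps,\tfrac{t}{\eps}\bigr).
\]
Fixing $\sigma:=t/\eps$ and setting $h(\tau):=g(\tau,\sigma)$, this turns the sum into
\[
  (\Delta_k g_\eps)(t)=\sum_{i=0}^k(-1)^i\binom{k}{i}h(t-i\eps),
\]
the classical $k^\text{th}$-order backward finite difference of the single-variable function $h$ with step $\eps$. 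The crucial effect of this step is that the integer shift $-i$ in the fast argument vanishes, so $\sigma$ survives only as a frozen parameter and the only regularity we need is that of $g$ in its first variable.

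I would then invoke the finite-difference/derivative correspondence, but carried out only as far as the available smoothness permits. Since $h$ is $k-1$ times differentiable with continuous derivatives up to order $k-1$ (the top one being Lipschitz, hence continuous), I can write each first difference as $h^{(j)}(\tau)-h^{(j)}(\tau-\eps)=\int_0^\eps h^{(j+1)}(\tau-u)\,du$ and, noting that backward differencing commutes with differentiation, iterate this $k-1$ times to obtain, with $s:=t-u_1-\cdots-u_{k-1}$,
\[
  (\Delta_k g_\eps)(t)=\int_0^\eps\!\!\cdots\!\int_0^\eps\bigl(h^{(k-1)}(s)-h^{(k-1)}(s-\eps)\bigr)\,du_1\cdots du_{k-1}.
\]
Because $\partial_1^{k-1}g$ is Lipschitz with a constant $L$ independent of the second variable, the integrand is bounded in norm by $L\eps$; integrating over the $(k-1)$-cube of side $\eps$ then gives $\norm{(\Delta_k g_\eps)(t)}\le L\eps^k$. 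As $L$ does not depend on $\sigma=t/\eps$, this bound is uniform in $t$ and $\eps$, which is exactly $(\Delta_k g_\eps)(t)=\bigO(\eps^k)$.

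The one genuinely delicate point is precisely this last estimate. The textbook identity $(\Delta_k g_\eps)(t)=\eps^k h^{(k)}(\xi)$ assumes $h\in\mathcal{C}^k$, whereas here $h^{(k)}$ need not exist at all. The remedy — and the reason for stopping the iterated integration after $k-1$ rather than $k$ steps — is to leave a bare first difference of the Lipschitz function $h^{(k-1)}$ and bound it directly by $L\eps$, never differentiating a $k^\text{th}$ time. This is the extension of the classical finite-difference result to our setting advertised before the statement, and it is what lets the lemma hold under Lipschitz rather than $\mathcal{C}^k$ regularity.
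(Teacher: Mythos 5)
Your proof is correct, and it shares only its first move with the paper's proof: both exploit $1$-periodicity to write $g_\eps(t-i\eps)=g\bigl(t-i\eps,\tfrac{t}{\eps}\bigr)$, freezing the fast argument so that the sum becomes a classical backward difference in the slow variable alone. After that the two arguments genuinely diverge. The paper expands each term $g\bigl(t-i\eps,\tfrac{t}{\eps}\bigr)$ about $t$ using the Lipschitz form of Taylor's formula (citing \cite{EllisSD1990JDE} for the integral remainder, which the Lipschitz hypothesis bounds by $\bigO(\eps^k)$), then interchanges the two sums and annihilates all terms of order below $k$ with the combinatorial identity $\sum_{i=0}^k(-1)^i\binom{k}{i}i^j=0$ for $j<k$, cited to \cite{Ruiz1996TMG}. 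You instead use the iterated integral representation of finite differences: converting $k-1$ of the $k$ first differences into integrals of successive derivatives, and leaving the last one as a raw difference of the Lipschitz function $h^{(k-1)}$, bounded by $L\eps$. Your route is more elementary and self-contained --- it needs neither the Lipschitz--Taylor formula nor the vanishing-moments identity, and it produces the explicit uniform constant $L$ in the bound $\norm{(\Delta_k g_\eps)(t)}\le L\eps^k$ --- at the cost of checking that differencing commutes with the integral representation at each of the $k-1$ iterations (routine, since the fundamental theorem of calculus applies: every derivative up to order $k-1$ exists and the top one is continuous). Both proofs exploit the limited regularity in the same essential way, by never differentiating a $k^{\text{th}}$ time and letting the Lipschitz property of $\partial_1^{k-1}g$, with its constant uniform in $\sigma$, supply the final factor of $\eps$.
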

\begin{proof}
	By the Lipschitz form of Taylor's formula~\cite[(2.1)]{EllisSD1990JDE},
	\begin{IEEEeqnarray*}{rCl}
		g(t+\mu,\sigma) &=& \sum_{j=0}^{k-1} \frac{\mu^j}{j!}\partial_1^j g(t,\sigma)  + \mu^k \rho_t(\mu,\sigma),
	\end{IEEEeqnarray*}
	where the remainder $\rho_t$ is $\bigO(1)$ since it satisfies
	\begin{IEEEeqnarray*}{rCl}
		\mu\rho_t(\mu,\sigma) &=& \frac{1}{(k-2)!} \int_0^1 (1-\tau)^{(k-2)}\\ &&\quad\times\,\bigl(\partial_1^{k-1}g(t+\mu\tau,\sigma) -\partial_1^{k-1} g(t,\sigma)\bigr)d\tau.
	\end{IEEEeqnarray*}
	Applying this to $g\bigl(t-i\eps, \tfrac{t-i\eps}{\eps} \bigr)=g\bigl(t-i\eps, \tfrac{t}{\eps} \bigr)$ since $g$ is 1-periodic yields
	\begin{IEEEeqnarray*}{rCl}
		(\Delta_k g_\eps)(t) 
		&=& \sum_{i=0}^k (-1)^i \binom{k}{i} g\bigl(t-i\eps, \tfrac{t-i\eps}{\eps} \bigr)\\
		&=& \sum_{i=0}^k (-1)^i \binom{k}{i} \Biggl(\sum_{j=0}^{k-1} 	\frac{(-i\eps)^j}{j!}\partial_1^j g(t,\tfrac{t}{\eps})+ \bigO(\eps^k)\Biggr)\\
		&=&	\sum_{j=0}^{k-1}\frac{(-\eps)^j}{j!}\partial_1^j g(t,\tfrac{t}{\eps})
		\sum_{i=0}^k (-1)^i\binom{k}{i}i^j+\bigO(\eps^k)
	\end{IEEEeqnarray*}
	As $\sum_{i=0}^k (-1)^i\binom{k}{i}i^j=0$, see \cite[Cor.~2]{Ruiz1996TMG}, this gives the desired result.
\end{proof}
%
%
\begin{lemma}
	\label{lemma:delta}
	Let $g(t,\sigma)$ be 1-periodic, and $k-1$ times differentiable in the first variable with $\partial_1^{(k-1)}g$ Lipschitz. Then $\bigl(K_k^{(k)}*g_\eps\bigr)(t)=\bigO(1)$.
\end{lemma}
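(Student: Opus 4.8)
The plan is to observe that convolution against $K_k^{(k)}$ is, up to the factor $\eps^{-k}$, exactly the backward-difference operator $\Delta_k$ already estimated in Lemma~\ref{lemma:deltak}. First I would make the distribution $K_k^{(k)}$ fully explicit. Since $K_k=K_1*\cdots*K_1$ ($k$ factors) and differentiation distributes across convolution, one derivative can be placed on each factor, giving $K_k^{(k)}=\bigl(K_1^{(1)}\bigr)^{*k}$. Using $K_1^{(1)}=\tfrac{1}{\eps}(\delta_0-\delta_\eps)$, the rule $\delta_a*\delta_b=\delta_{a+b}$, and the binomial expansion, this yields
\[
	K_k^{(k)}=\frac{1}{\eps^k}\sum_{i=0}^k(-1)^i\binom{k}{i}\delta_{i\eps},
\]
a scaled comb of Diracs supported on $[0,k\eps]$.

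Next I would convolve this comb with~$g_\eps$. Because $g$ is at least continuous (it is $k-1\ge0$ times differentiable), each term acts by a plain shift, $\bigl(\delta_{i\eps}*g_\eps\bigr)(t)=g_\eps(t-i\eps)$, so that
\[
	\bigl(K_k^{(k)}*g_\eps\bigr)(t)=\frac{1}{\eps^k}\sum_{i=0}^k(-1)^i\binom{k}{i}g_\eps(t-i\eps)=\frac{1}{\eps^k}(\Delta_k g_\eps)(t).
\]
The hypotheses on~$g$ are precisely those of Lemma~\ref{lemma:deltak}, which gives $(\Delta_k g_\eps)(t)=\bigO(\eps^k)$; dividing by $\eps^k$ yields $\bigl(K_k^{(k)}*g_\eps\bigr)(t)=\bigO(1)$, as claimed, and uniformly in~$t$ since the constant in Lemma~\ref{lemma:deltak} is.

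The only genuinely delicate point is the distributional bookkeeping: checking that $K_k^{(k)}=\bigl(K_1^{(1)}\bigr)^{*k}$ and that convolving the resulting Dirac comb with the function $g_\eps$ reproduces the pointwise shifted sum. This is clean here because the text already records that $K_k$ is $k-1$ times classically differentiable and that its $k$-th distributional derivative is a finite linear combination of Dirac masses; pairing such a finite combination with the continuous function~$g_\eps$ is unambiguous and pointwise. Thus no regularity beyond what is assumed is needed, and the whole lemma collapses to the algebraic identity $K_k^{(k)}*g_\eps=\eps^{-k}\,\Delta_k g_\eps$ followed by the previous lemma.
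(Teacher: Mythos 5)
Your proof is correct and follows essentially the same route as the paper: both reduce the statement to the identity $K_k^{(k)}*g_\eps=\eps^{-k}\Delta_k g_\eps$ and then invoke Lemma~\ref{lemma:deltak}, the only difference being that the paper establishes this identity by induction on~$k$ while you obtain it directly by expanding $K_k^{(k)}=\bigl(K_1^{(1)}\bigr)^{*k}$ into a binomial Dirac comb. One small caution: your aside that $g_\eps$ is continuous is not justified (no regularity in the second variable is assumed, and the carriers of interest are discontinuous in~$\sigma$), but this is harmless since convolution with a finite combination of Dirac masses is a pointwise shift regardless.
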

\begin{proof}
	We first prove by induction that $K_k^{(k)}*g_\eps=\frac{1}{\eps^k}\Delta_kg_\eps$. Indeed, $K_1'*g_\eps=\frac{1}{\eps}\bigl(\delta_0-\delta_\eps\bigr)*g_\eps=\frac{1}{\eps}\Delta_1g_\eps$. Assuming the property holds at rank~$k$,
	\begin{IEEEeqnarray*}{rCl}
		K_{k+1}^{(k+1)}*g_\eps &=& (K_k*K_1)^{(k+1)}*g_\eps\\
		&=& K_k^{(k)}*K_1'*g_\eps\\
		&=& \frac{1}{\eps^k}\Delta_k(K_1'*g_\eps)\\
		&=& \frac{1}{\eps^k}\Delta_k\Bigl(\frac{\Delta_1g_\eps}{\eps}\Bigr)\\
		&=& \frac{1}{\eps^{k+1}}\Delta_{k+1}g_\eps. 
	\end{IEEEeqnarray*}
	To obtain the second line, we have repeatedly used $(T*S)'=T'*S=T*S'$.
	
	Applying Lemma~\ref{lemma:deltak}, we eventually find $(K_k^{(k)}*g_\eps)(t)=\frac{1}{\eps^k}\Delta_kg_\eps(t)=\frac{1}{\eps^k}\bigO(\eps^k)=\bigO(1)$.
\end{proof}

\begin{lemma}
	\label{th:convol_Kk}
	Let $g$ be~$\mathcal{A}_{k}$. Then $\bigl(K_k*g_\eps\bigr)(t) = \bigO(\eps^k)$
\end{lemma}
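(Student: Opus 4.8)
The plan is to reduce the statement to the finite-difference estimates already in hand, by trading the $k$ fast-variable primitives encoded in the $\mathcal{A}_k$ property against $k$ derivatives transferred onto the kernel $K_k$, each trade producing one factor of~$\eps$. Since $g$ is $\mathcal{A}_k$ we may write $g=\partial_2^{k}g^{(-k)}$, and the chain rule $\frac{d}{dt}\bigl(g^{(-j)}\bigr)_\eps=\bigl(\partial_1 g^{(-j)}\bigr)_\eps+\frac1\eps\bigl(g^{(-j+1)}\bigr)_\eps$ (using $\partial_2 g^{(-j)}=g^{(-j+1)}$) can be solved for $\bigl(g^{(-j+1)}\bigr)_\eps$ and iterated from $j=k$ down to $j=1$. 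First I would establish the resulting exact identity
\[
	g_\eps = \eps^{k}\frac{d^{k}}{dt^{k}}\bigl(g^{(-k)}\bigr)_\eps
	-\sum_{j=1}^{k}\eps^{j}\frac{d^{\,j-1}}{dt^{\,j-1}}\bigl(\partial_1 g^{(-j)}\bigr)_\eps ,
\]
which one checks by hand for $k=1,2,3$ and then by a straightforward induction on~$k$.

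Next I would convolve with $K_k$ and push every time derivative onto the kernel; this is legitimate because $K_k\in\mathcal{C}^{k-1}$ has compact support, so all boundary terms vanish and $K_k*\frac{d^{p}}{dt^{p}}h=K_k^{(p)}*h$ for $p\le k$, the top derivative being taken in the distributional sense. The leading term is $\eps^{k}\bigl(K_k^{(k)}*(g^{(-k)})_\eps\bigr)(t)$, which equals $\bigl(\Delta_k (g^{(-k)})_\eps\bigr)(t)$ by the identity $K_k^{(k)}*h_\eps=\tfrac1{\eps^k}\Delta_k h_\eps$ proved inside Lemma~\ref{lemma:delta}. Because the $\mathcal{A}_k$ property makes $g^{(-k)}$ exactly $k-1$ times differentiable in the first variable with $\partial_1^{k-1}g^{(-k)}$ Lipschitz, Lemma~\ref{lemma:deltak} applies and yields $\Delta_k (g^{(-k)})_\eps=\bigO(\eps^k)$; thus the leading term already has the desired order. (In the fixed-shape case $\partial_1 g\equiv0$ the sum vanishes and this is the whole proof.)

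It then remains to show that each cross term $\eps^{j}\bigl(K_k^{(j-1)}*(\partial_1 g^{(-j)})_\eps\bigr)(t)$, $1\le j\le k$, is also $\bigO(\eps^k)$, and this is the hard part. A crude estimate is hopeless because $\norm{K_k^{(j-1)}}_1$ grows like $\eps^{-(j-1)}$, so cancellation must be exploited. I would use the kernel identity $\eps^{\,p}K_k^{(p)}=(\mathrm{I}-\tau_\eps)^{p}K_{k-p}$, obtained by convolving $\eps K_1'=\delta_0-\delta_\eps$ repeatedly, to rewrite the $j$-th term as $\eps\,(\mathrm{I}-\tau_\eps)^{j-1}\bigl(K_{k-j+1}*(\partial_1 g^{(-j)})_\eps\bigr)$, i.e.\ a single power of $\eps$ multiplying a $(j-1)$-fold backward difference. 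The remaining smallness must be extracted simultaneously from that difference and from the fact that $\partial_1 g^{(-j)}=(\partial_1 g)^{(-j)}$ is a high zero-mean fast primitive: differentiating the $\mathcal{A}_k$ hypothesis in the slow variable shows that $(\partial_1 g)^{(-j)}$ itself inherits $\mathcal{A}$-regularity of order $k-j+1$, so that (by induction on $k$, or by splitting $K_{k-j+1}=K_1*K_{k-j}$ and reusing the statement) $K_{k-j+1}*\bigl((\partial_1 g)^{(-j)}\bigr)_\eps=\bigO(\eps^{\,k-j+1})$. Reconciling this with the powers of $\eps$ produced by the backward difference, where Lemma~\ref{lemma:deltak} at order $j-1$ re-enters, should bring every cross term down to $\bigO(\eps^k)$, and summing then finishes the proof.

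The main obstacle is precisely this last bookkeeping. One cannot treat the blow-up $\norm{K_k^{(p)}}_1\sim\eps^{-p}$ of the differentiated kernels and the smallness of the iterated fast primitives separately, since each alone leaves too few powers of~$\eps$; the two must be balanced against one another through the finite-difference cancellation of Lemmas~\ref{lemma:deltak} and~\ref{lemma:delta} together with the inherited $\mathcal{A}$-regularity of the primitives. Carrying out this balance uniformly in $j$ is exactly where the slowly-varying case departs from the fixed-shape case, and it is the step I expect to require the most care; the use of the $\mathcal{A}_k$ property when integrating by parts is what makes it go through.
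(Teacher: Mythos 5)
Your skeleton is half right, and the first half is genuinely correct: the exact identity for $g_\eps$, the transfer of all time derivatives onto the compactly supported kernel, and the leading-term estimate $\eps^k K_k^{(k)}*\bigl(g^{(-k)}\bigr)_\eps=\Delta_k\bigl(g^{(-k)}\bigr)_\eps=\bigO(\eps^k)$ via Lemmas~\ref{lemma:delta} and~\ref{lemma:deltak} are sound, and ``trade fast primitives for kernel derivatives by parts'' is indeed the paper's central trick. The genuine gap is that the cross terms --- which you rightly call the heart of the matter --- are never actually estimated, and the reconciliation you sketch fails as stated because your two sources of smallness combine \emph{additively}, whereas you need them to combine \emph{multiplicatively}. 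Precisely, for $T_j:=\eps\,(\mathrm{I}-\tau_\eps)^{j-1}\bigl[K_{k-j+1}*(\partial_1 g^{(-j)})_\eps\bigr]$: the inductive bound $K_{k-j+1}*(\partial_1 g^{(-j)})_\eps=\bigO(\eps^{k-j+1})$ together with the trivial bound on the $j-1$ backward differences gives only $\bigO(\eps^{k-j+2})$, sufficient for $j\le2$; pushing the differences onto the signal and citing Lemma~\ref{lemma:deltak} gives $\eps\,K_{k-j+1}*\bigO(\eps^{j-1})=\bigO(\eps^{j})$, sufficient for $j=k$; and even the hybrid where one extracts the top Taylor term $\eps^{j-1}\bigl(\partial_1^{j}g^{(-j)}\bigr)_\eps$ from $\Delta_{j-1}(\partial_1 g^{(-j)})_\eps$ and feeds it to the induction leaves a sup-norm Taylor remainder contributing $\bigO(\eps^{j+1})$, sufficient only for $j\ge k-1$. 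The range $3\le j\le k-2$ is covered by none of these (first failure at $k=5$, $j=3$); your plan actually closes for $k\le4$, which may be why it feels complete, but the lemma is claimed for every $k$.

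The missing idea is the mechanism that makes the difference gain and the convolution gain multiply. The paper's answer is to never produce finite-difference remainders on under-regular objects: it proves the nested intermediate estimates $K_k^{(p)}*\bigl(g^{(-p)}\bigr)_\eps=\bigO(\eps^{k-p})$ recursively, each step using the \emph{exact} chain-rule identity $\bigl((g^{(-p-1)})_\eps\bigr)'=(\partial_1 g^{(-p-1)})_\eps+\tfrac{1}{\eps}(g^{(-p)})_\eps$ followed by integration by parts, so that slow derivatives pile up only on the deepest primitives $\partial_1^m g^{(-k)}$, and Lemma~\ref{lemma:delta} is invoked only on terms whose slow regularity exactly matches the kernel-derivative order. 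If you insist on your flattened decomposition, the fix is to reopen Lemma~\ref{lemma:deltak} rather than cite it: with the integral-form Taylor remainder, $\Delta_{j-1}(\partial_1 g^{(-j)})_\eps$ is exactly $\eps^{j-1}$ times a superposition over shifts $a\in[0,(j-1)\eps]$ of slow-shifted copies $\bigl(\partial_1^{j}g^{(-j)}\bigr)(\cdot-a,\cdot/\eps)$, each of which is $\mathcal{A}_{k-j}$ uniformly in $a$, whence $K_{k-j+1}*\Delta_{j-1}(\partial_1 g^{(-j)})_\eps=\eps^{j-1}\,\bigO(\eps^{k-j})=\bigO(\eps^{k-1})$ and $T_j=\bigO(\eps^k)$; that is the multiplicative combination your last paragraph gestures at but does not supply. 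Two smaller repairs: $\partial_1 g$ itself need not exist under $\mathcal{A}_k$ (the PWM carrier is discontinuous in $t$), so $(\partial_1 g)^{(-j)}$ must be read as $\partial_1\bigl[g^{(-j)}\bigr]$, whose existence --- like the paper's own use of $\partial_1^m g^{(-m)}$ --- is an implicit extra regularity assumption; and your inherited order $k-j+1$ is wrong at $j=1$, where $\partial_1 g^{(-1)}$ is only $\mathcal{A}_{k-1}$, not $\mathcal{A}_k$ (fortunately $\mathcal{A}_{k-1}$ is all that term needs).
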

\begin{proof}
	We sketch the proof for $k=3$, the general result following by induction.
	Notice that by assumption $g^{(-3)}$ is twice differentiable in the first variable with $\partial_1^2g^{(-3)}$ Lipschitz, which will be used each time Lemma~\ref{lemma:delta} is invoked.
	
	We first prove $\Bigl(K_3''*\bigl(g^{(-2)}\bigr)_\eps\Bigr)(t)=\bigO(\eps)$. Starting from
	\begin{IEEEeqnarray*}{rCl}
		\Bigl(\bigl(g^{(-3)}\bigr)_\eps\Bigr)' 
		&=& \bigl(\partial_1g^{(-3)}\bigr)_\eps + \frac{1}{\eps}\bigl(\partial_2g^{(-3)}\bigr)_\eps\\
		&=& \bigl(\partial_1g^{(-3)}\bigr)_\eps + \frac{1}{\eps}\bigl(g^{(-2)}\bigr)_\eps,
	\end{IEEEeqnarray*}
	we find after convolving with $K_3''$ and integrating by parts
	\begin{IEEEeqnarray*}{rCl}
		K_3''*\bigl(g^{(-2)}\bigr)_\eps &=& \eps K_3'''*\bigl(g^{(-3)}\bigr)_\eps - \eps K_3''*\bigl(\partial_1g^{(-3)}\bigr)_\eps;
	\end{IEEEeqnarray*}
	the boundary terms vanish since $K_3''$ has compact support.
	The first term is $\bigO(\eps)$ by Lemma~\ref{lemma:delta}. Using $K_3''=(K_1*K_2)''=K_1*K_2''$,
	the second term reads $\eps K_1*\Bigl(K_2''*\bigl(\partial_1g^{(-3)}\bigr)_\eps\Bigr)$, hence is also 
	$\bigO(\eps)$ by Lemma~\ref{lemma:delta}. The sum of the two terms is therefore also~$\bigO(\eps)$.
	
	We next prove $\Bigl(K_3'*\bigl(g^{(-1)}\bigr)_\eps\Bigr)(t)=\bigO(\eps^2)$. Indeed, using successively
	\begin{IEEEeqnarray*}{rCl}
		\bigl(g^{(-1)}\bigr)_\eps &=& \eps\bigl(g^{(-2)}\bigr)_\eps' - \eps\bigl(\partial_1g^{(-2)}\bigr)_\eps\\
		\bigl(\partial_1g^{(-2)}\bigr)_\eps &=& \eps\bigl(\partial_1g^{(-3)}\bigr)_\eps' - \eps\bigl(\partial_1^2g^{(-3)}\bigr)_\eps
	\end{IEEEeqnarray*}
	yields
	\begin{IEEEeqnarray*}{rCl}
		\bigl(g^{(-1)}\bigr)_\eps &=& \eps\bigl(g^{(-2)}\bigr)_\eps'
		- \eps^2\bigl(\partial_1g^{(-3)}\bigr)_\eps'
		+ \eps^2\bigl(\partial_1^2g^{(-3)}\bigr)_\eps.
	\end{IEEEeqnarray*}
	Convolving with $K_3'$ and integrating by parts,
	\begin{IEEEeqnarray*}{rCl}
		K_3'*\bigl(g^{(-1)}\bigr)_\eps &=& \eps K_3''*\bigl(g^{(-2)}\bigr)_\eps
		- \eps^2K_3''*\bigl(\partial_1g^{(-3)}\bigr)_\eps\\
		&&\quad+\, \eps^2K_3'*\bigl(\partial_1^2g^{(-3)}\bigr)_\eps;
	\end{IEEEeqnarray*}
	the boundary terms vanish since $K_3'$ has a bounded support.
	We already know the first two terms are~$\bigO(\eps^2)$. Using $K_3'=(K_2*K_1)'=K_2*K_1'$,
	the last term reads $\eps^2K_2*\Bigl(K_1'*\bigl(\partial_1^2g^{(-3)}\bigr)_\eps\Bigr)$, hence is also 
	$\bigO(\eps^2)$ by lemma~\ref{lemma:delta}. The sum of the three terms is therefore also~$\bigO(\eps^2)$.
	
	We finally prove $\bigl(K_3*g_\eps\bigr)(t)=\bigO(\eps^3)$. Indeed, using successively
	\begin{IEEEeqnarray*}{rCl}
		g_\eps &=& \eps\bigl(g^{(-1)}\bigr)_\eps' - \eps\bigl(\partial_1g^{(-1)}\bigr)_\eps\\
		\bigl(\partial_1g^{(-1)}\bigr)_\eps &=& \eps\bigl(\partial_1g^{(-2)}\bigr)_\eps' - \eps\bigl(\partial_1^2g^{(-2)}\bigr)_\eps\\
		\bigl(\partial_1^2g^{(-2)}\bigr)_\eps &=& \eps\bigl(\partial_1^2g^{(-3)}\bigr)_\eps' - \eps\bigl(\partial_1^3g^{(-3)}\bigr)_\eps,
	\end{IEEEeqnarray*}
	we find
	\begin{IEEEeqnarray*}{rCl}
		g_\eps &=& \eps\bigl(g^{(-1)}\bigr)_\eps'
		- \eps^2\bigl(\partial_1g^{(-2)}\bigr)_\eps'\\
		&&\quad+\, \eps^3\bigl(\partial_1^2g^{(-3)}\bigr)_\eps'
		- \eps^3\bigl(\partial_1^3g^{(-3)}\bigr)_\eps.
	\end{IEEEeqnarray*}
	Convolving with $K_3$ and integrating by parts,
	\begin{IEEEeqnarray*}{rCl}
		K_3*g_\eps &=& \eps K_3'*\bigl(g^{(-1)}\bigr)_\eps	- \eps^2K_3'*(\partial_1g^{(-2)}\bigr)_\eps\\
		&&\quad+\, \eps^3K_3'*\bigl(\partial_1^2g^{(-3)}\bigr)_\eps
		- \eps^3K_3*\bigl(\partial_1^3g^{(-3)}\bigr)_\eps;
	\end{IEEEeqnarray*}
	the boundary terms vanish since $K_3$ has a bounded support.
	We already know the first and third terms are~$\bigO(\eps^3)$. The second term reads
	\begin{IEEEeqnarray*}{rCl}
		\eps^2K_3'*(\partial_1g^{(-2)}\bigr)_\eps
		&=& \eps^3K_3'*\bigl(\partial_1g^{(-3)}\bigr)_\eps' - \eps^3K_3'*\bigl(\partial_1^2g^{(-3)}\bigr)_\eps\\
		&=& \eps^3K_3''*\bigl(\partial_1g^{(-3)}\bigr)_\eps - \eps^3K_3'*\bigl(\partial_1^2g^{(-3)}\bigr)_\eps\\
		&=& \eps^3K_1*\Bigl(K_2''*\bigl(\partial_1g^{(-3)}\bigr)_\eps\Bigr)\\
		&&\quad -\, \eps^3K_2'*\Bigl(K_1'*\bigl(\partial_1^2g^{(-3)}\bigr)_\eps\Bigr),
	\end{IEEEeqnarray*}
	and is also $\bigO(\eps^3)$ by using Lemma~\ref{lemma:delta} twice. Finally, by Young's convolution inequality, 
	the fourth term satisfies
	\begin{IEEEeqnarray*}{rCl}
		\norm{\eps^3K_3*\partial_1^3g^{(-3)}}_\infty \le \eps^3\norm{K_3}_1\norm{\partial_1^3g^{(-3)}}_\infty,
	\end{IEEEeqnarray*}
	hence is also $\bigO(\eps^3)$; notice $\partial_1^3g^{(-3)}$ is bounded by assumption, and so is~$K_3$.
	The sum of the four terms is therefore also~$\bigO(\eps^3)$, which concludes the proof.
\end{proof}

\vspace{-5mm}
\begin{figure}[ht]
	\centering
	\includegraphics[scale=0.95]{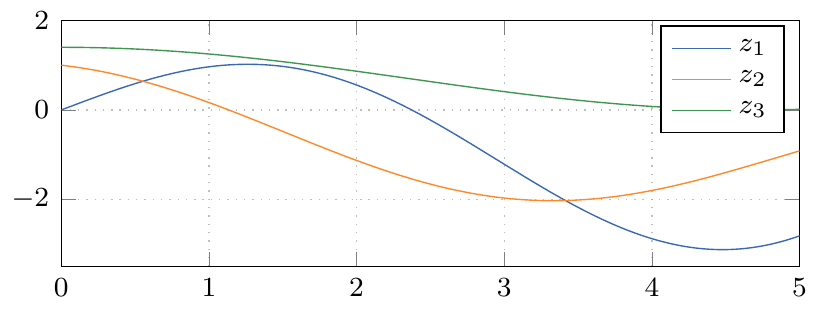}
	\caption{Encoded signals $z_1(t),z_2(t),z_3(t)$.}
	\label{fig:Y}
\end{figure}\vspace{-5mm}
\begin{figure}[ht]
	\centering
	\includegraphics[scale=0.95]{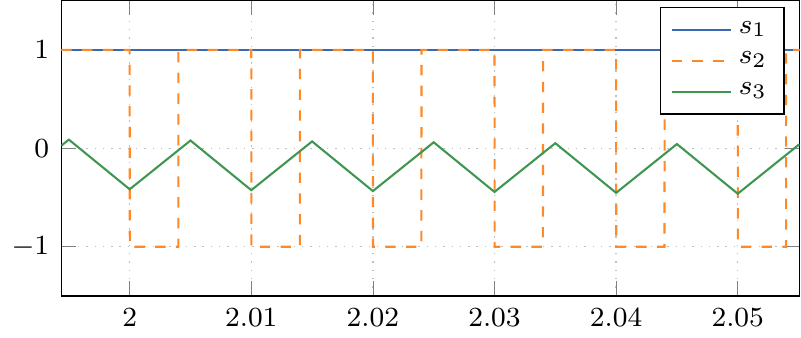}
	\caption{Carriers $s_1(t),s_2(t),s_3(t)$ (zoom).}
	\label{fig:S}
\end{figure}\vspace{-5mm}
\begin{figure}[ht]
	\centering
	\includegraphics[scale=0.95]{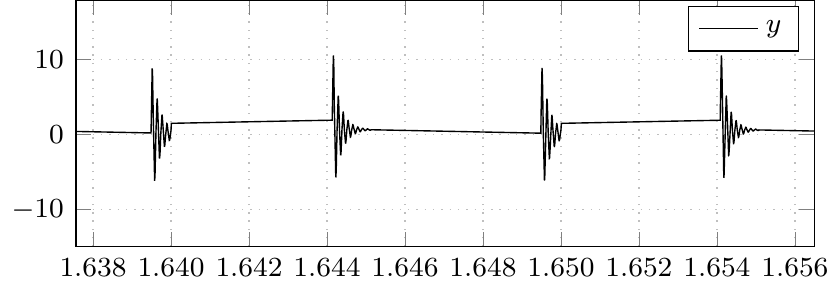}
	\caption{Composite signal~$y(t)$ (zoom).}
	\label{fig:y}
\end{figure}

\section{Numerical experiments}\label{sec:numerical}
We illustrate the error analysis of Theorem~\ref{thm:main} with numerical experiments for~$k=1,2,3$.
\begin{figure}[ht]
	\centering
	\includegraphics[scale=0.95]{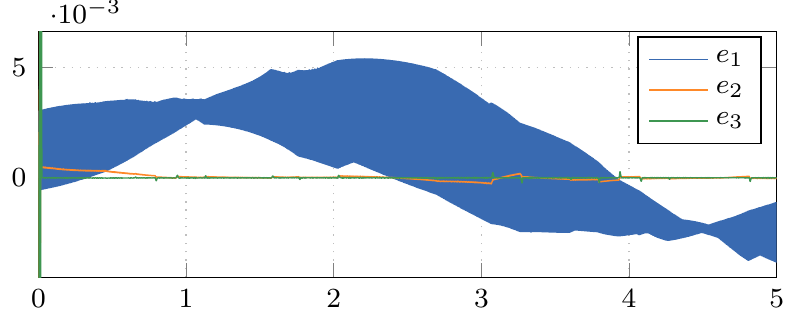}
\includegraphics[scale=0.95]{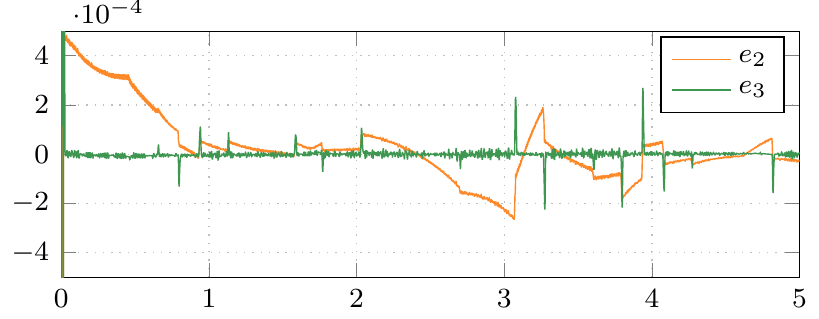}
	\caption{Errors $e_k(t)$ for $\eps = 10^{-2}$: full view (top), zoom (bottom).}
	\label{fig:e2}
\end{figure}
Consider the composite signal $y$ defined on $[0,5]$ by
\begin{align*}
	y(t) &= z_1(t)s_1(t,\tfrac{t}{\varepsilon})+ z_2(t)s_2(t,\tfrac{t}{\varepsilon})+
	z_3(t)s_3(t,\tfrac{t}{\varepsilon}) + d(t,\tfrac{t}{\varepsilon}),
\end{align*}
with encoded signals $z_1,z_2,z_3$ (see Fig.~\ref{fig:Y})
\begin{IEEEeqnarray*}{rCl}
	z_1(t) &:=& 2\sin(t) - 1.5\sin\bigl(\tfrac{t}{2}\bigr)\\
	z_2(t) &:=& \cos (t)-1.2\sin\bigl(\tfrac{t}{4}\bigr)\\
	z_3(t) &:=& 1.4\cos(\tfrac{t}{3})^2;
\end{IEEEeqnarray*}
and carriers $s_1,s_2,s_3$ (see Fig.~\ref{fig:S})
\begin{IEEEeqnarray*}{rCl}
	s_1(t,\sigma) &:=& 1\\
	s_2(t,\sigma) &:=& \sign\bigl(\tfrac{t}{20} + \boldsymbol{\sigma} -0.5\bigr)\\
	s_3(t,\sigma) &:=& \begin{cases}
		\cos (t)+ \boldsymbol{\sigma} & \boldsymbol{\sigma} \leq 0.5 \\
		\cos(t)+1 -\boldsymbol{\sigma} & \boldsymbol{\sigma} \geq 0.5, \\
	\end{cases}
\end{IEEEeqnarray*}
where $\boldsymbol{\sigma}:=\sigma\mod1$. The support of the disturbance~$d$
is
\begin{IEEEeqnarray*}{rCl}
	D_t &:=& \bigl[f(t)-\tfrac{1}{20},f(t)+\tfrac{1}{20}\bigr]
	\cup\bigl[g(t)-\tfrac{1}{20},g(t)+\tfrac{1}{20}\bigr],
\end{IEEEeqnarray*}
with $f(t):=\tfrac{1}{2}\bigl(1+\sin (t)\bigr)$ and $g(t):=\tfrac{1}{2}\bigl(1+\cos (t)\bigr)$; hence, on a window of length~$\eps$ between \SI{10}{\percent} (when the two intervals coincide) and \SI{20}{\percent} (when the two intervals are disjoint) of the signal is corrupted. Fig~\ref{fig:y} displays the resulting signal~$y$, with the spikes caused by~$d$ clearly visible.

We select the simplest demodulating basis that is zero on~$D_t$, namely 
$R(t,\sigma):=\bigl(1-\mathbbm{1}_{D_t}(\sigma)\bigr)S(t,\sigma)$; tedious but routine computations show $SR^T-\overline{SR^T}$ is $\mathcal{A}_k$ for $k=1,2,3$.
We check numerically that $\overline{SR^T}(t)$ is invertible by plotting its condition number~$\kappa$, see Fig.~\ref{fig:cond}: indeed, $\overline{SR^T}(t)$ is always well-conditioned, except during the filter initialization.
%
%
%
%

	\begin{figure}[ht]
	\centering
	\includegraphics[scale=0.95]{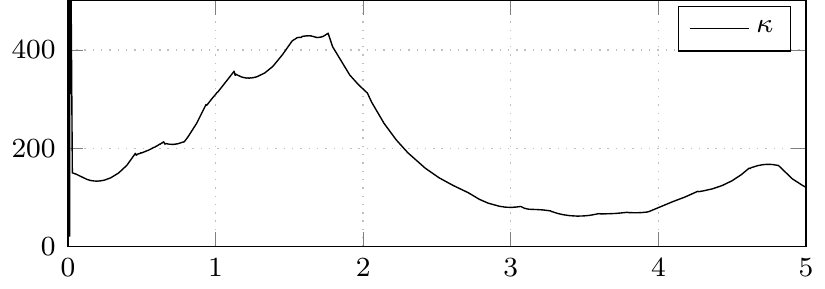}
	\caption{Condition number $\kappa(t)$ of matrix~$\overline{SR^T}(t)$.}
	\label{fig:cond}
\end{figure}

\begin{figure}[ht]
	\centering
	\includegraphics{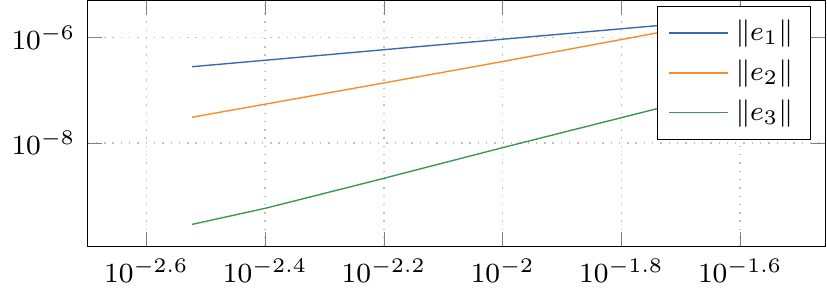}
	\caption{$L^2$-error $\norm{e_k}$ as a function of $\eps$.}
	\label{fig:asymp}
\end{figure}


We focus on the recovery of~$z_2$, since it is modulated by the least regular carrier. We consider the error $e_k(t):=z_2(t) - P_k^2[y](t)$, where $P_k^2[y]$ denotes the second component of~$P_k[y]$. For $\eps$ fixed, the error decreases as anticipated with~$k$, see Fig~\ref{fig:e2}.  To study the asymptotic behavior as a function of~$\eps$, we consider the $L_2$-error $\norm{e_k}:= \bigl(\int_1^5 \bigl(e_k(t)\bigr)^2dt\bigr)^\frac{1}{2}$; the first second of data is discarded to ensure the filters are well initialized. As anticipated, the plots in log scale are straight lines with slopes equal to the orders of the estimates, see~Fig.~\ref{fig:asymp}.

\section{Conclusion}
We have proposed a demodulation procedure to recover analog signals encoded by multiple carriers with slowly-varying shapes. Though the procedure is not completely surprising at first sight, proving that the overall demodulation error is arbitrarily small is not obvious. Arguably, the framework is somewhat peculiar, which explains why no similar work seems to exist in the literature. Nevertheless, the result is exactly what we need for the application we have in mind, namely the ``sensorless'' control of AC electric motors at or near standstill. 
In this application, the composite signal~$y$ to be decoded is the (vector) current in the motor, the motor itself acting as a multicarrier modulator when fed by a PWM inverter; a suitable processing of the demodulated signal then yields the rotor angle, which is needed to accurately control the motor.


\bibliographystyle{phmIEEEtran}
\bibliography{biblio.bib}

%
%

\end{document}